\newtheorem{prop}{Proposition}
\begin{document}
\title{Hybrid Sparse Array  Beamforming Design for General Rank Signal Models}
\author{Syed~A.~Hamza,~\IEEEmembership{Student Member,~IEEE}
        and~Moeness~G.~Amin,~\IEEEmembership{Fellow,~IEEE}
\thanks{This work is supported by National Science Foundation (NSF) award no. AST-1547420. This paper was presented in part at the 2019 IEEE International Conference on Acoustics, Speech and Signal Processing, Brighton, U.K., May 2019 \cite{8682266} and 2018 IEEE  Radar Conference, Oklahoma City, USA, April 2018 \cite{8378759}. (\textit{Corresponding author: Syed A. Hamza}.)}
\thanks{The authors are with the Center for Advanced Communications (CAC),  College of
	Engineering, Villanova University, PA 19085-1681 USA (e-mails: shamza@villanova.edu, moeness.amin@villanova.edu).}
}
\maketitle
\begin{abstract}
The paper considers sparse array design for receive beamforming achieving maximum signal-to-interference plus noise ratio (MaxSINR) for both single point source and multiple point sources, operating in an interference active environment. Unlike existing sparse design methods which either deal with structured environment-independent or non-structured environment-dependent arrays, our method is a hybrid approach and seeks a full augumentable array that optimizes beamformer performance. This approach proves important for limited aperture that constrains the number of possible uniform grid points for sensor placements. The problem is formulated as quadratically constraint quadratic program (QCQP), with the cost function penalized with weighted $l_1$-norm squared of the beamformer weight vector. Simulation results are presented to show the effectiveness of the proposed algorithms for array configurability in the case of both single and general rank signal correlation matrices. Performance comparisons among the proposed sparse array, the commonly used  uniform arrays, arrays obtained by other design methods, and arrays designed without the augmentability constraint are provided.
\end{abstract}
\begin{IEEEkeywords}
Sparse arrays,  MaxSINR, QCQP, Fully augmentable  array, Hybrid array.
\end{IEEEkeywords}
\IEEEpeerreviewmaketitle
\section{Introduction}
Sparse array design through sensor selection reduces system receiver overhead by lowering the hardware costs and processing complexity. It finds applications in sensor signal processing for communications, radar, sonar, satellite navigation,  radio telescopes, speech enhancement  and ultrasonic imaging \cite{710573, 6477161, 1428743, 4058261,  4663892,   6031934}. One primary goal in these applications is to determine sensor locations to achieve optimality for some pre-determined performance criteria. This optimality includes minimizing the mean radius of the confidence ellipsoid associated with the estimation error covariance matrix  \cite{4663892}, and lowering the  Cramer Rao bound (CRB)   for angle estimation in direction finding problem \cite{6981988}. The receiver performance then depends largely on the operating environment, which may change according to the source and interference signals and locations.  This is in contrast to sparse arrays whose configurations follow certain formulas and seek to attain high extended aperture co-arrays. The driving objective, in this case,  is to enable direction of arrival (DOA) estimation of more sources than physical sensors. Common examples are structured arrays such as nested and coprime arrays \cite{1139138, 5456168, 7012090}.
 
Sparse array design typically involves the selection of a subset of uniform grid points for sensor placements. For a given number of sensors, it is often assumed that the number of grid points, spaced by half wavelength,  is unlimited. However, in many applications, there is a constraint on  the spatial extent of the system aperture. In this case,  a structured array, in seeking to maximize the number of spatial autocorrelation lags,  may find itself placing  sensors beyond the available physical aperture. The problem then becomes that of dual constraints, one relates to the number of sensors, and the other to the number of grid-points.

 With a limited aperture constraint invoked, few sensors may in fact be sufficient to produce a desirable filled structured co-array, even with narrowband assumption and without needing wideband or multiple frequencies \cite{7106511}. In this case, any additional sensors, constitute a surplus that  can be utilized to meet an environment-dependent performance criterion, such as maximum signal-to-interference and noise ratio (SINR).  Thereby, one   can in essence reap the benefits of structured and non-structured arrays. This paradigm calls for a new aperture design approach that strives to provide filled co-arrays and, at the same time, be environment-sensitive.  This hybrid design approach is the core contribution of this paper.

Sparse sensor  design has  thoroughly been studied  to economize the  receive beamformer \cite{1144829, 1144961, 1137831, Lo1966ASO, 1534, 97353, 5680595, doi:10.1029/2010RS004522,  299602, 775291, 6663667, article1, 6145602, 6822510}.  However, in contrast to MaxSINR design,  the main focus of  the efforts, therein, was in achieving  desirable beampattern characteristics with nominal sidelobe levels, since the sparse beamformer  is susceptible to high  sidelobe levels. For example, an array thinning design was proposed for sidelobe minimization in \cite{1534}  by starting from a fully populated array and sequentially  removing sensors in a systematic manner. Instead, the sparse array design presented in \cite{97353} to  optimize the peak sidelobe level   involves a  joint design of sensor locations and their corresponding beamforming weights. A beampattern matching design explained in \cite{5680595} can  effectively recover  sparse topologies  through an iterative cyclic approach.  Additionally, global optimization tools such as Genetic Algorithms/Simulated Annealing and  convex relaxation schemes based on re-weighted $l_1$-norm minimization have  been  rigorously exploited in sensor selection problem  for synthesizing a user-specified receive beampattern response \cite{299602, 775291, 6663667, article1, 6145602, 6822510}.

In environment-dependent array design, signal power estimation and enhancement in an interference active environment  has a direct bearing on improving   target detection and localization for radar   signal processing,   increasing   throughput or  channel capacity for MIMO wireless communication systems, and enhancing   resolution capability in  medical imaging \cite{Goldsmith:2005:WC:993515, Trees:1992:DEM:530789, 1206680}. It is noted that with sparse array, the commonly used Capon beamforming must not only find the optimum weights but also the optimum array configuration. This is clearly an entwined optimization problem, and requires finding maximum SINR over all possible sparse array configurations. Maximum signal to noise ratio (MaxSNR) and MaxSINR have been shown to yield significantly efficient beamforming with  performance depending mainly on the positions of the sensors as well as the locations of sources in the field of view (FOV) \cite{6774934,  1634819, 6714077}.      \par

In this paper, we  consider a bi-objective optimization problem, namely achieving the filled co-array  and maximizing the SINR. The proposed technique enjoys  key advantages as compared to state-of-the-art sparse aperture design, namely, (a) It does not require any \textit{a priori} knowledge of the  jammers directions of arrival and their respective power which is implicitly assumed in  previous contributions \cite{8061016, 8313065, 8036237}. As such, it is possible to directly work  on the received data   correlation matrix   (b) It extends to spatial spread sources in a straightforward way.  \par
\begin{figure}[!t]
	\centering
	\includegraphics[width=3.35 in, height=2.3 in]{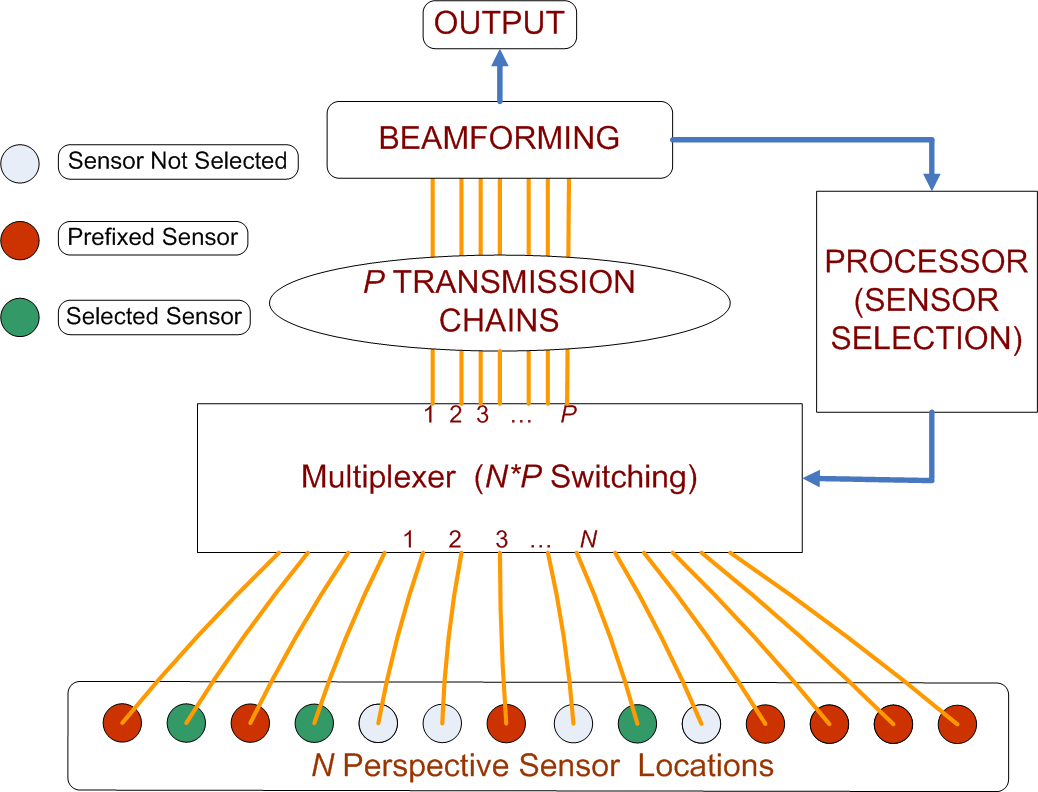}
	\caption{Block diagram of adaptive switched sensor beamformer}
	\label{Block_diagram}
\end{figure}
 The proposed hybrid approach first determines a prefixed sparse array that results in a filled   co-array with minimum number of sensors. This prefixed configuration could be a minimum redundancy array (MRA) \cite{1139138}, nested or coprime array configuration that fills the aperture under consideration with minimal sensors, allowing maximum degrees of freedom for SINR maximization. This prefixed sensor configuration can  be  achieved by an optimization problem involving the minimum number of sensors spanning a pre-determined aperture. However, for the scope of this paper, the prefixed configuration is  set by MRA or other structured arrays.   The remaining sensors after forming the prefixed array are  utilized to maximize the  SINR.  The cascade nature of the proposed hybrid approach is relatively simpler than the  ultimate design approach that produces the optimum filled sparse array that maximizes SINR. Environment-dependent array design lowers the hardware complexity by reducing the expensive transmission chains through sensor switching as shown in the  block diagram   in  Fig. \ref{Block_diagram}. The proposed hybrid approach, however,  has an added advantage of offering a simplified sensor switching in time-varying environment.  This is attributed to   large number of fixed location sensors which would always remain  non-switched, irrespective of the sources and interferences in the FOV. 
 
  The proposed hybrid approach is  particularly permissive as the number $N$ of possible sensor locations increases.  To further clarify, it is noted that sparse arrays having $N$ available sensors can typically span a filled array aperture of the order of $\mathcal{O}(N(N-1)/2)$ \cite{5456168}; conversely, given an aperture spanning  $N$ possible sensor locations, only  $\mathcal{O}(N^{1/2})$ sensors are sufficient to synthesize a fully augmentable array design. This emphasizes the fact that as the possible aperture size increases, then relatively  few sensors are required to meet the full augmentability condition,  leaving more degrees of freedom to optimize  for SINR enhancement. The hybrid  approach also lends itself to  more desirable  beampattern characteristics  by  maintaining  minimum spacing between sensor elements. It is important to note  that having fully augmentable arrays not only provide    the benefits of simplified sensor switching and improved identifiability of large number of sources,  but also they ensure  the availability of full array data covariance matrix essential to carry  optimized SINR configuration \cite{709534}, \cite{Abramovich:1999:PTC:2197953.2201686}.  Therefore, the proposed  simplified hybrid  sensor switching architecture ensures the knowledge of global data statistics at all times,   in contrast to  previous efforts in \cite{7096707, 6333987, 8234698} that sort to optimize data dependent microphone placement viz a viz transmission power. The  proposed methodology therein   targets a different objective function and  primarily relies  on  local heuristics. In this  case,  sensor switching comes with an  additional  implementation overhead, in an  attempt to recursively match the  performance offered by the knowledge of global statistics.     \par

We consider the problem of MaxSINR sparse arrays with limited aperture for both single and higher rank signal correlation matrices. The case of single rank correlation matrix arises when there is one desired source signal in the FOV, whereas the case of higher rank signal model occurs for  spatially spread source.   The   problem is posed as optimally selecting $P$ sensors out of $N$ possible equally spaced grid points. Maximizing SINR amounts to maximizing the  principal eigenvalue of the product of  the inverse of data correlation matrix and the desired source correlation matrix \cite{1223538}. Since it   is an NP hard optimization problem, we pose this problem as QCQP with weighted $l_1$-norm squared to promote sparsity. The re-weighted $l_{1}$-norm squared relaxation is effective for reducing the required sensors and   minimizing the transmit power for multicast beamforming  \cite{ 6477161}. We propose a modified re-weighting matrix  based iterative approach to control the sparsity of the optimum weight vector so that  $P$ sensor fully augmentable hybrid array is finally selected. This modified regularization re-weighting matrix  based approach  incorporates the prefixed structured array   assumption in our design and works by minimizing the objective function   around the presumed prefixed array.   \par  
The rest of the paper is organized as follows:  In the next section, we  state the problem formulation for  maximizing the  output SINR under general rank signal correlation matrix. Section \ref{Optimum sparse array design} deals with the optimum sparse array design by  semidefinite relaxation (SDR) and proposed  modified re-weighting  based iterative algorithm of finding $P$ sensor fully augmentable  hybrid  sparse array design. In section \ref{Simulations}, with the aid of number of  design examples,  we demonstrate the usefulness of fully augmentable arrays achieving MaxSINR and highlight the effectiveness of the proposed methodology  for sparse array design. Concluding remarks follow at the end.
\section{Problem Formulation} \label{Problem Formulation}
Consider  $K$  desired sources and $L$ independent interfering source signals impinging on a  linear array with $N$ uniformly placed sensors. The baseband signal received at the array  at  time instant $t$ is then given by; 
\begin {equation} \label{a}
\mathbf{x}(t)=   \sum_{k=1}^{K} (\alpha _k(t)) \mathbf{s}( \theta_k)  + \sum_{l=1}^{L} (\beta _l(t)) \mathbf{v}( \theta_l)  + \mathbf{n}(t),
\end {equation}
where, $\mathbf{s} ({\theta_k})$  and $\mathbf{v} ({\theta_l})$ $\in \mathbb{C}^N$ are  the corresponding steering vectors    respective to directions  of arrival, $\theta_k$ or $\theta_l$, and are defined  as follows;
\vspace{+2mm}
\begin {equation}  \label{b}
\mathbf{s} ({\theta_k})=[1 \,  \,  \, e^{j (2 \pi / \lambda) d cos(\theta_k)  } \,  . \,   . \,  . \, e^{j (2 \pi / \lambda) d (N-1) cos(\theta_k)  }]^T.
\end {equation}
The inter-element spacing  is denoted  by $d$, ($\alpha _k(t)$, $\beta _l(t))$ $\in \mathbb{C}$  denote the complex amplitudes of the incoming baseband signals \cite{trove.nla.gov.au/work/15617720}. The additive Gaussian noise $\mathbf{n}(t)$ $\in \mathbb{C}^N$ has a  variance of  $\sigma_n^2$ at the receiver output.
The received signal vector $\mathbf{x}(t)$   is   combined linearly by the $N$-sensor  beamformer that strives to maximize the output SINR. The output signal $y(t)$ of the optimum beamformer for maximum SINR is given by \cite{1223538}, 
\begin {equation}  \label{c}
y(t) = \mathbf{w}_o^H \mathbf{x}(t),
\end {equation}
where $\mathbf{w}_o$ is the solution of the  optimization problem given below;
\begin{equation} \label{d}
\begin{aligned}
\underset{\mathbf{w} \in \mathbb{C}^N}{\text{minimize}} & \quad   \mathbf{w}^H\mathbf{R}_{s^{'}}\mathbf{w},\\
\text{s.t.} & \quad     \mathbf{ w}^H\mathbf{R}_{s}\mathbf{ w}=1.
\end{aligned}
\end{equation}
For statistically independent signals, the desired source correlation matrix is given by, $\mathbf{R}_s= \sum_{k=1}^{K} \sigma^2_k \mathbf{s}( \theta_k)\mathbf{s}^H( \theta_k)$, where, $ \sigma^2_k =E\{\alpha _k(t)\alpha _k^H(t)\}$. Likewise,  we have the  interference and noise correlation matrix $\mathbf{R}_{s^{'}}= \sum_{l=1}^{L} (\sigma^2_l \mathbf{v}( \theta_l)\mathbf{v}^H( \theta_l)$) + $\sigma_n^2\mathbf{I}_{N\times N}$,    with $ \sigma^2_l =E\{\beta _l(t)\beta_l^H(t)\}$ being the power of the $l$th interfering source. The problem in (\ref{d}) can  be written equivalently by replacing $\mathbf{R}_{s^{'}}$ with the received data covariance matrix, $\mathbf{R_{xx}}=\mathbf{R}_s+ \mathbf{R}_{s^{'}}$  as follows \cite{1223538},
\begin{equation} \label{e}
\begin{aligned}
\underset{\mathbf{w} \in \mathbb{C}^N }{\text{minimize}} & \quad   \mathbf{ w}^H\mathbf{R_{xx}}\mathbf{ w},\\
\text{s.t.} & \quad     \mathbf{ w}^H\mathbf{R}_{s}\mathbf{ w} \geq 1.
\end{aligned}
\end{equation}
It is noted that the  equality constraint in (4) is  relaxed in (5) due to  the inclusion of  the constraint as part of the objective function, and as such, (5)  converges to  the equality constraint. Additionally, the optimal solution in (5) is invariant up to  uncertainty of the absolute powers of the sources of interest.  Accordingly, the relative power profile of the sources of interest would suffice.  For a single desired point source, this implies that only the knowledge of the DOA of the desired source is sufficient  rather than the exact knowledge of the desired source correlation matrix. Similarly, neither the source power nor the average power of the scatterers is required in (5) for spatially spread sources when the spatial channel model, such as the Gaussian or circular, is assumed \cite{656151}.  However, in practice, these assumptions can deviate from the actual received data statistics and hence the discrepancy is typically mitigated, to an extent,  by preprocessing  the received data correlation matrix through diagonal loading or tapering the  correlation matrix \cite{1206680}.

     There exists a  closed form solution of the above  optimization problem  and  is given by $\mathbf{w}_o = \mathscr{P} \{  \mathbf{R}_{s^{'}}^{-1} \mathbf{R}_s  \}=\mathscr{P} \{  \mathbf{R_{xx}}^{-1} \mathbf{R}_s  \}$. The operator $\mathscr{P} \{. \}$  computes the principal eigenvector of the input matrix. Substituting $\mathbf{w}_o$ into  (\ref{c})  yields the corresponding optimum output SINR$_o$;
\begin{equation}  \label{f}
\text{SINR}_o=\frac {\mathbf{w}_o^H \mathbf{R}_s \mathbf{w}_o} { \mathbf{w}_o^H \mathbf{R}_{s^{'}} \mathbf{w}_o} = \Lambda_{max}\{\mathbf{R}^{-1}_{s^{'}} \mathbf{R}_s\}.
\end{equation}
 This shows that the optimum output SINR$_o$ is given by the maximum eigenvalue ($\Lambda_{max}$) associated with   the product of  the inverse of interference plus noise correlation matrix  and the desired source correlation matrix. Therefore, the performance of the optimum beamformer for maximizing the output SINR is directly related to the desired and interference plus noise correlation matrix. It is to be noted that the  rank of the desired source signal correlation matrix equals $K$, i.e. the cardinality of the desired sources.

\section{Optimum sparse array design}  \label{Optimum sparse array design}
The problem of locating the maximum  principal eigenvalue among all the   correlation matrices associated with $P$ sensor selection is a combinatorial optimization problem. The  constraint optimization (\ref{e}) can be re-formulated for optimum sparse array design by incorporating  an additional constraint on the cardinality of  the weight vector;   
\begin{equation} \label{a2}
\begin{aligned}
\underset{\mathbf{w \in \mathbb{C}}^{N}}{\text{minimize}} & \quad  \mathbf{ w}^H\mathbf{R_{xx}}\mathbf{w},\\
\text{s.t.} & \quad   \mathbf{ w}^H\mathbf{R}_{s}\mathbf{ w}\geq1,  \\
& \quad   ||\mathbf{w}||_0=P.
\end{aligned}
\end{equation}
Here,  $||.||_0$ determines the cardinality of the weight   vector $\mathbf{w}$.  We  assume that we have an estimate of all the filled co-array  correlation lags corresponding to the correlation matrix of the  full  aperture array. The problem expressed in  (\ref{a2})  can  be relaxed  to induce the sparsity in the beamforming weight vector $\mathbf{w}$ without placing a hard constraint on the specific cardinality of $\mathbf{w}$, as follows \cite{doi:10.1002/cpa.20132};
\begin{equation} \label{b2}
\begin{aligned}
\underset{\mathbf{w \in \mathbb{C}}^{N}}{\text{minimize}} & \quad   \mathbf{ w}^H\mathbf{R_{xx}}\mathbf{w} + \mu(||\mathbf{w}||_1),\\
\text{s.t.} & \quad    \mathbf{ w}^H\mathbf{R}_{s}\mathbf{ w} \geq1.  \\
\end{aligned}
\end{equation}
Here, $||.||_1$  is the  sparsity inducing $l_1$-norm and $\mu$ is a parameter to control the desired sparsity in the solution. Even though the relaxed problem expressed in  (\ref{b2}) is not exactly similar to that of  (\ref{a2}), yet it is well known that  $l_1$-norm  regularization has been an effective tool for recovering sparse solutions in many diverse formulations \cite{Bruckstein:2009:SSS:1654028.1654037,5651522, doi:10.1190/1.1440921}. The  problem in $(\ref{b2})$  can be  penalized instead  by the weighted $l_1$-norm function  which is a well known sparsity promoting formulation \cite{Candès2008},
\begin{equation} \label{c2}
\begin{aligned}
\underset{\mathbf{w \in \mathbb{C}}^{N}}{\text{minimize}} & \quad   \mathbf{ w}^H\mathbf{R_{xx}}\mathbf{w} + \mu(||(\mathbf{b}^i\circ|\mathbf{w}|)||_1),\\
\text{s.t.} & \quad    \mathbf{ w}^H\mathbf{R_{s}}\mathbf{ w} \geq1.  \\
\end{aligned}
\end{equation}
where, \textquotedblleft$\circ$\textquotedblright \, denotes the element wise product, \textquotedblleft$|.|$\textquotedblright \, is the modulus operator and $\mathbf{b}^i \in \mathbb{R}^N$  is the regularization re-weighting vector  at  the $i$th iteration. Therefore, $(\ref{c2})$ is the sequential optimization methodology, where the regularization re-weighting vector $\mathbf{b}^i$ is typically chosen as an inverse function of the beamforming weight vector obtained at the previous iteration. This, in turn,  suppresses the sensors corresponding to  smaller beamforming weights, thereby encouraging sparsity in an iterative fashion.    The weighted $l_1$-norm function   in $(\ref{c2})$  is replaced by the  $l_1$-norm  squared  function which does not  alter the regularization property of  the weighted  $l_1$-norm function \cite{6477161},
\begin{equation} \label{d2}
\begin{aligned}
\underset{\mathbf{w \in \mathbb{C}}^{N}}{\text{minimize}} & \quad   \mathbf{ w}^H\mathbf{R_{xx}}\mathbf{w} + \mu(||(\mathbf{b}^i\circ|\mathbf{w}|)||^2_1),\\
\text{s.t.} & \quad    \mathbf{ w}^H\mathbf{R}_{s}\mathbf{ w} \geq1.  \\
\end{aligned}
\end{equation}
The semidefinite  formulation (SDP) of the above problem can then be realized  by re-expressing the quadratic form, $ \mathbf{ w}^H\mathbf{R_{xx}}\mathbf{w}= $Tr$(\mathbf{ w}^H\mathbf{R_{xx}}\mathbf{w})= $Tr$(\mathbf{R_{xx}}\mathbf{w}\mathbf{ w}^H)=$ Tr$(\mathbf{R_{xx}}\mathbf{W})$, where Tr(.) is the trace of the matrix. Similarly, the regularization term $||(\mathbf{b}^i\circ|\mathbf{w}|)||^2_1=(|\mathbf{w}|^T\mathbf{b}^i)( (\mathbf{b}^i)^T|\mathbf{w}|)=\mathbf{|w|}^T\mathbf{B}^i\mathbf{|w|}=$Tr$(\mathbf{B}^i\mathbf{|W|})$. Here,  $\mathbf{W}=\mathbf{w}\mathbf{w}^H$ and $\mathbf{B}^i=\mathbf{b}^i(\mathbf{b}^i)^T$ is the regularization re-weighting matrix at the $i$th iteration. Utilizing these quadratic expressions in (\ref{d2})  yields the following problem \cite{Bengtsson99optimaldownlink, 8378759, 6477161},
\begin{equation} \label{e2}
\begin{aligned}
\underset{\mathbf{W \in \mathbb{C}}^{N\times N}, \mathbf{\tilde{W} \in \mathbb{R}}^{N\times N}}{\text{minimize}} & \quad   \text{Tr}(\mathbf{R_{xx}}\mathbf{W}) + \mu \text{Tr}(\mathbf{B}^i\mathbf{\tilde{W}}),\\
\text{s.t.} & \quad    \text{Tr}(\mathbf{R}_s\mathbf{W}) \geq 1,  \\
& \quad \mathbf{\tilde{W}} \geq |\mathbf{W}|,\\
& \quad   \mathbf{W} \succeq 0, \, \text{Rank}(\mathbf{W})=1.
\end{aligned}
\end{equation}

The function \textquotedblleft$|.|$\textquotedblright \, returns the  absolute values of the entries of the matrix, \textquotedblleft$\geq$\textquotedblright \, is the element wise comparison and \textquotedblleft$\succeq$\textquotedblright \, denotes the  generalized  matrix inequality. The auxiliary matrix $\mathbf{\tilde{W}} \in \mathbb{R}^{N\times N}$ implements the weighted  $l_1$-norm  squared regularization along with the re-weighting matrix $\mathbf{B}^i$.   The  rank constraint in  (\ref{e2}) is non convex and therefore need to be removed. The rank relaxed approximation works well for the underlying problem. In case, the solution matrix is not rank $1$, we can resort to randomization to harness  rank $1$ approximate solutions \cite{5447068}. Alternatively, one could  minimize the nuclear norm of $\mathbf{W}$,  as a  surrogate for $l_1 $-norm in the case of   matrices, to induce sparsity in the eigenvalues of $\mathbf{W} $ and promote rank one solutions  \cite{Recht:2010:GMS:1958515.1958520, Mohan:2012:IRA:2503308.2503351}. The  resulting rank relaxed semidefinite program (SDR) is  given  by;
\begin{equation} \label{f2}
\begin{aligned}
\underset{\mathbf{W \in \mathbb{C}}^{N\times N}, \mathbf{\tilde{W} \in \mathbb{R}}^{N\times N}}{\text{minimize}} & \quad   \text{Tr}(\mathbf{R_{xx}}\mathbf{W}) + \mu \text{Tr}(\mathbf{B}^i\mathbf{\tilde{W}}),\\
\text{s.t.} & \quad    \text{Tr}(\mathbf{R}_s\mathbf{W}) \geq 1,  \\
& \quad \mathbf{\tilde{W}} \geq |\mathbf{W}|,\\
& \quad   \mathbf{W} \succeq 0.
\end{aligned}
\end{equation}
 In general, QCQP  is NP hard and cannot  be solved in polynomial time.  The formulation in  (\ref{f2})  is clearly convex, in terms of unknown matrices, as all the  other correlation matrices involved  are guaranteed to be positive semidefinite. The sparsity parameter $\mu$ largely determines the cardinality of the solution beamforming weight vector.   To ensure $P$ sensor selection, appropriate value of $\mu$ is typically found by carrying a binary search over the probable range of $\mu$. After achieving the desired cardinality, the reduced size thinned correlation matrix $\mathbf{R_{xx}}$ is formed corresponding to the non-zero values of $\mathbf{\tilde{W}}$. The reduced dimension SDR is now solved with setting $\mu=0$, yielding optimum beamformer $\mathbf{w}_o=\mathscr{P} \{  \mathbf{W} \}$. 
\subsection{Fair gain beamforming}
The optimization in   (\ref{f2}) strives to incorporate the signal from all the directions of interest while optimally removing the interfering signals. To achieve this objective, the optimum sparse array may  show leaning towards a certain source of interest, consequently, not offering fair gain towards all sources.  In an effort to promote equal gain towards all sources, we put a separate constraint on the power towards all desired sources as follows;
\begin{equation} \label{g2}
\begin{aligned}
\underset{\mathbf{W \in \mathbb{C}}^{N\times N}, \mathbf{\tilde{W} \in \mathbb{R}}^{N\times N}}{\text{minimize}} & \quad   \text{Tr}(\mathbf{R_{xx}}\mathbf{W}) + \mu \text{Tr}(\mathbf{B}^i\mathbf{\tilde{W}}),\\
\text{s.t.} & \quad    \text{Tr}(\mathbf{R}_k\mathbf{W}) \geq 1, \, \, \, \, \forall k\in (1,2,3 ... K)   \\
& \quad \mathbf{\tilde{W}} \geq |\mathbf{W}|,\\
& \quad   \mathbf{W} \succeq 0.
\end{aligned}
\end{equation}
 Here,  $\mathbf{R}_k= \mathbf{s}( \theta_k)\mathbf{s}^H( \theta_k)$ is the rank $1$ covariance matrix associated with the source at DOA $( \theta_k)$.   However, the above SDR can   be solved to an arbitrary small accuracy  $\zeta$, by employing  interior point methods  involving   the   worst case complexity of  $\mathcal{O}\{$max$(K,N)^4N^{(1/2)}\log(1/\zeta)\}$   \cite{5447068}.
\begin{algorithm}[t!] \label{algorithm}
	
	\caption{Proposed algorithm to achieve desired cardinality of optimal weight vector $\mathbf{w}_o$.}
	
	\begin{algorithmic}[]
		
		\renewcommand{\algorithmicrequire}{\textbf{Input:}}
		
		\renewcommand{\algorithmicensure}{\textbf{Output:}}
		
		\REQUIRE Data correlation matrix $\mathbf{R_{xx}}$, $N$, $P$, look direction DOA's $\theta_k$,  hybrid selection vector $\mathbf{z}$. \\
		
		\ENSURE  $P$ sensor beamforming weight vector $\mathbf{w}_o$,   \\
				    Initialize  $\epsilon$.\\
				 	Initialize $\mu_{lower}$, $\mu_{upper}$ (Initializing lower and upper limits of sparsity parameter range for	binary search for desired cardinality $P$)\\

		   \textbf{FSDR:} Initialize  $\mathbf{B=zz}^T$. \\
		   \textbf{NFSDR:} For optimum array design without the augmentability constraint, initialize  $\mathbf{z}$  to be all ones vector, $\mathbf{B=zz}^T$ (all ones matrix). \\
		   \textbf{Perturbed-NFSDR:} Locate the sensor $i$ such that, if not selected, results in the minimum compromise of the objective function. Perturb $\mathbf{z}$ at position $i$, $\mathbf{z}(i)=\mathbf{z}(i)+\gamma$, afterwards calculating $\mathbf{B=zz}^T$. \\
		 \WHILE {(Cardinality of $\mathbf{w}_o$ $\neq$ $P$)}
		\STATE  Update $\mu$ through binary search.
		\FOR {(Typically requires five to six iterations)}
			\STATE   Run the  SDR of  (\ref{f2}) or (\ref{g2}) (Fair gain case).
		   \STATE   Update the regularization weighting matrix $\mathbf{B}$ according to (\ref{i2}). \\ 
		\ENDFOR
         	\ENDWHILE
		
		\STATE After achieving the desired cardinality, run SDR for reduced size correlation matrix corresponding to nonzero values of $\mathbf{\tilde{W}}$  and $\mu=0$,  yielding, $\mathbf{w}_o=\mathscr{P} \{  \mathbf{W} \} $. 
		
		\RETURN $\mathbf{w}_o$
		
	\end{algorithmic}
	\label{algorithm}
\end{algorithm}
\subsection{Modified re-weighting for fully augmentable hybrid array}\label{Modified Re-Weighting}
 For the case without the full augmentability constraint the   regularization re-weighting matrix  $\mathbf{B}$ is initialized unweighted i.e. by all ones matrix and the $m,n$th element of \textbf{B} is iteratively updated as follows \cite{Candès2008},
\begin{equation} \label{h2}
\mathbf{B}_{m,n}^{i+1}=\frac{1}{|\mathbf{W}_{m,n}^i|+\epsilon}.
\end{equation}
The parameter $\epsilon$ avoids the unwanted case of division by zero, though its choice is fairly independent to the performance of the iterative algorithm but at times very small values of $\epsilon$ can result in the algorithm getting trapped in the local minima.  For the hybrid array design, we initialize the re-weighting matrix instead as an outer product of hybrid selection vector $\mathbf{z}$.  The hybrid selection vector $\mathbf{z}$ is an $N$ dimensional vector containing binary entries of zero and one,  where, zeros correspond to the pre-selected sensors and ones correspond to the remaining sensors to be selected. Hence, the  cardinality  of $\mathbf{z}$ is equal to the  difference of the total number of available sensors and the number of pre-selected sensors. This modified re-weighting approach ensures  that the sensors corresponding to the pre-selected configuration is  not penalized as part of the regularization, hence,  $\mathbf{B=zz}^T$, thrives solutions that incorporate the pre-selected array topology. The modified penalizing weight update for the hybrid array design can be expressed as;
\begin{equation} \label{i2}
\mathbf{B}^{i+1}=(\mathbf{zz}^T) \varoslash ({|\mathbf{W}^i|+\epsilon}).
\end{equation}
The symbol \textquotedblleft$\varoslash$\textquotedblright \, denotes element wise division. For the hybrid design, (15) is proposed with appropriate selection of  $\mathbf{z}$, as explained above, and hereafter referred to as the Fixed SDR (FSDR).  The array designed without the augmentability consideration is the special case of (15) with $\mathbf{z}$ being an all ones vector and  the algorithm is subsequently regarded as the Non-Fixed SDR (NFSDR).   The pseudo-code  for controlling the sparsity of the optimal weight vector $\mathbf{w}_o$ is summarized in   $\bf{Algorithm \, \, \ref{algorithm}}$.
\subsection{Symmetric arrays}\label{Symmetric arrays}
 The solution of the  NFSDR formulation is penchant for symmetric arrays in the case of  symmetric initialization vector   $\mathbf{z}$.  The plausible explanation is  as follows.  We first show that the  beamforming weights which maximizes the output SINR for symmetric sparse array topologies are conjugate symmetric w.r.t. the array center.
\begin{prop}
The conjugate symmetry of the optimal weight vector   holds for centro-symmetric sparse array configurations in case of the general rank desired source  model.
\end{prop}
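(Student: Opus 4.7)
The plan is to exploit the reversal symmetry that a centro-symmetric array grid imprints on every steering vector, propagate that symmetry to the full correlation matrices, and then combine it with the eigenvector characterization of $\mathbf{w}_o$ given in (6). Let $\mathbf{J}$ denote the exchange (flip) matrix of the appropriate dimension. For the uniform grid used throughout the paper, a direct calculation on the steering vector in (2) gives $\mathbf{J}\mathbf{s}(\theta) = e^{j(N-1)(2\pi/\lambda)d\cos\theta}\,\mathbf{s}^*(\theta)$, and the unit-modulus scalar cancels in the rank-one outer product, so $\mathbf{J}\,\mathbf{s}(\theta)\mathbf{s}^H(\theta)\,\mathbf{J} = \bigl(\mathbf{s}(\theta)\mathbf{s}^H(\theta)\bigr)^*$. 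Because the general-rank desired correlation matrix $\mathbf{R}_s$ is (by assumption on the signal model) a nonnegative superposition of such rank-one outer products, and because the noise part obeys $\mathbf{J}\sigma_n^2\mathbf{I}\mathbf{J}=\sigma_n^2\mathbf{I}$, I obtain the basic symmetry identities $\mathbf{J}\mathbf{R}_s\mathbf{J} = \mathbf{R}_s^{*}$ and $\mathbf{J}\mathbf{R}_{xx}\mathbf{J} = \mathbf{R}_{xx}^{*}$. For a centro-symmetric sparse selection $\mathbf{z}$ (that is, $\mathbf{J}\mathbf{z}=\mathbf{z}$), the same identities hold for the $P\times P$ reduced correlation matrices obtained by restriction to $\mathrm{supp}(\mathbf{z})$, since $\mathbf{J}$ simply permutes the selected indices among themselves.

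Next I would push this symmetry through the optimization (5), whose unique (up to scaling) maximizer is $\mathbf{w}_o = \mathscr{P}\{\mathbf{R}_{xx}^{-1}\mathbf{R}_s\}$. Define $\mathbf{w}_o' \triangleq \mathbf{J}\mathbf{w}_o^{*}$. Using $\mathbf{J}^H=\mathbf{J}$, $\mathbf{J}^2=\mathbf{I}$, and the identities above,
\begin{equation*}
(\mathbf{w}_o')^H\mathbf{R}_{xx}\mathbf{w}_o' \;=\; \mathbf{w}_o^{T}\mathbf{J}\mathbf{R}_{xx}\mathbf{J}\mathbf{w}_o^{*} \;=\; \mathbf{w}_o^{T}\mathbf{R}_{xx}^{*}\mathbf{w}_o^{*} \;=\; \bigl(\mathbf{w}_o^H\mathbf{R}_{xx}\mathbf{w}_o\bigr)^{*},
\end{equation*}
which is real and equals the minimum of the cost, and the same calculation applied to $\mathbf{R}_s$ shows the quadratic constraint is preserved. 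Hence $\mathbf{w}_o'$ is also optimal, i.e.\ it is a principal eigenvector of $\mathbf{R}_{xx}^{-1}\mathbf{R}_s$ with the same largest eigenvalue.

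Finally, under the generic assumption that this largest eigenvalue is simple, $\mathbf{w}_o$ and $\mathbf{J}\mathbf{w}_o^{*}$ must be collinear: $\mathbf{J}\mathbf{w}_o^{*} = c\,\mathbf{w}_o$ for some scalar $c\in\mathbb{C}$. Since $\mathbf{J}$ is unitary and conjugation preserves the norm, $|c|=1$, so $c=e^{j\phi}$. Absorbing a phase of $e^{j\phi/2}$ into the (otherwise arbitrary) overall phase of $\mathbf{w}_o$ — which is inconsequential for SINR — I obtain $\mathbf{J}\mathbf{w}_o^{*} = \mathbf{w}_o$, which is exactly the conjugate-symmetry statement with respect to the array centre. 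Restricting to the selected positions when $\mathbf{z}$ is centro-symmetric delivers the result for the sparse case.

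The main obstacle I anticipate is the degeneracy of the principal eigenvalue of $\mathbf{R}_{xx}^{-1}\mathbf{R}_s$, which is more likely in the general-rank setting (several coequal desired sources, or a symmetric spread spectrum) than in the rank-one case. When the maximal eigenspace $\mathcal{E}$ has dimension greater than one, the collinearity argument fails and one must instead observe that $\mathcal{E}$ is itself closed under the antilinear map $\mathbf{w}\mapsto\mathbf{J}\mathbf{w}^{*}$; for any $\mathbf{w}\in\mathcal{E}$, the combination $\mathbf{w}+\mathbf{J}\mathbf{w}^{*}$ (or $j(\mathbf{w}-\mathbf{J}\mathbf{w}^{*})$, whichever is nonzero) lies in $\mathcal{E}$ and is conjugate symmetric by construction, so a conjugate-symmetric optimizer can always be selected. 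A smaller bookkeeping point is parity: when $P$ is even the "centre" is a half-grid position and the phase factor above becomes $e^{j(P-1)\phi}$ rather than vanishing, but the cancellation in the outer product still holds, so the argument is unchanged.
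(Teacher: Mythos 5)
Your proof is correct and follows essentially the same route as the paper's appendix: you establish the flip--conjugation (persymmetry) identity $\mathbf{J}\mathbf{R}\mathbf{J}=\mathbf{R}^{*}$ for the desired-source and data correlation matrices, deduce that $\mathbf{J}\mathbf{w}_o^{*}$ is also a principal eigenvector of $\mathbf{R}_{xx}^{-1}\mathbf{R}_s$, and invoke uniqueness of that eigenvector to conclude $\mathbf{w}_o=\mathbf{J}\mathbf{w}_o^{*}$ up to a removable phase. You are in fact somewhat more careful than the paper, which asserts the identity by citation and glosses over both the unit-modulus scalar ambiguity and the possibility of a degenerate principal eigenvalue that your last paragraph handles explicitly.
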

\begin{proof}
(Refer to the \ref{Appendix1}ppendix for the proof.) 
\end{proof}
      We observe that the regularized cost function does not  invoke sparsity until after the first  few initial iterations. Consequently,   the initial solutions of the semidefinite program has symmetric coefficients as the NFSDR seeks near optimal solutions which  are  analytically shown to be conjugate symmetric.  Moreover, the iterative sparsity enhancing formulation introduces sparsity by penalizing the beamforming weight vector according to  (\ref{i2}), where, it only accounts the magnitude of the beamforming weights. Therefore, at each iteration the regularization re-weighting matrix $\mathbf{B}$ happens to penalize the solution weight vector in a symmetric fashion around the array center.  Thus, the   iterative NFSDR sparse solution favors   symmetric configurations by discarding corresponding symmetric  sensors simultaneously. Though, the symmetric configuration can  be suitable for certain applications \cite{8448767}, and can have desirable performance, yet,  it reduces the available degrees of freedom. Therefore, to avoid curtailing  the available degrees of freedom,  we   perturb the re-weighting regularization matrix   $\mathbf{B}$ at the initial iteration, as follows.   From $N$ prospective locations,  find the   sensor position,  which if not selected, results in the least compromise of  the objective function performance. Corresponding to the aforementioned  position,  set the  regularization weight to be relatively high through perturbation by parameter $\gamma$. By so doing, we resolve the issues arising from the  symmetric regularization re-weighting matrix. This modified algorithm is henceforth referred to as the perturbed-NFSDR and is detailed in $\bf{Algorithm \, \, \ref{algorithm}}$. 
\begin{figure}[t]
	\centering
	\includegraphics[width=3.38 in, height=2.5 in]{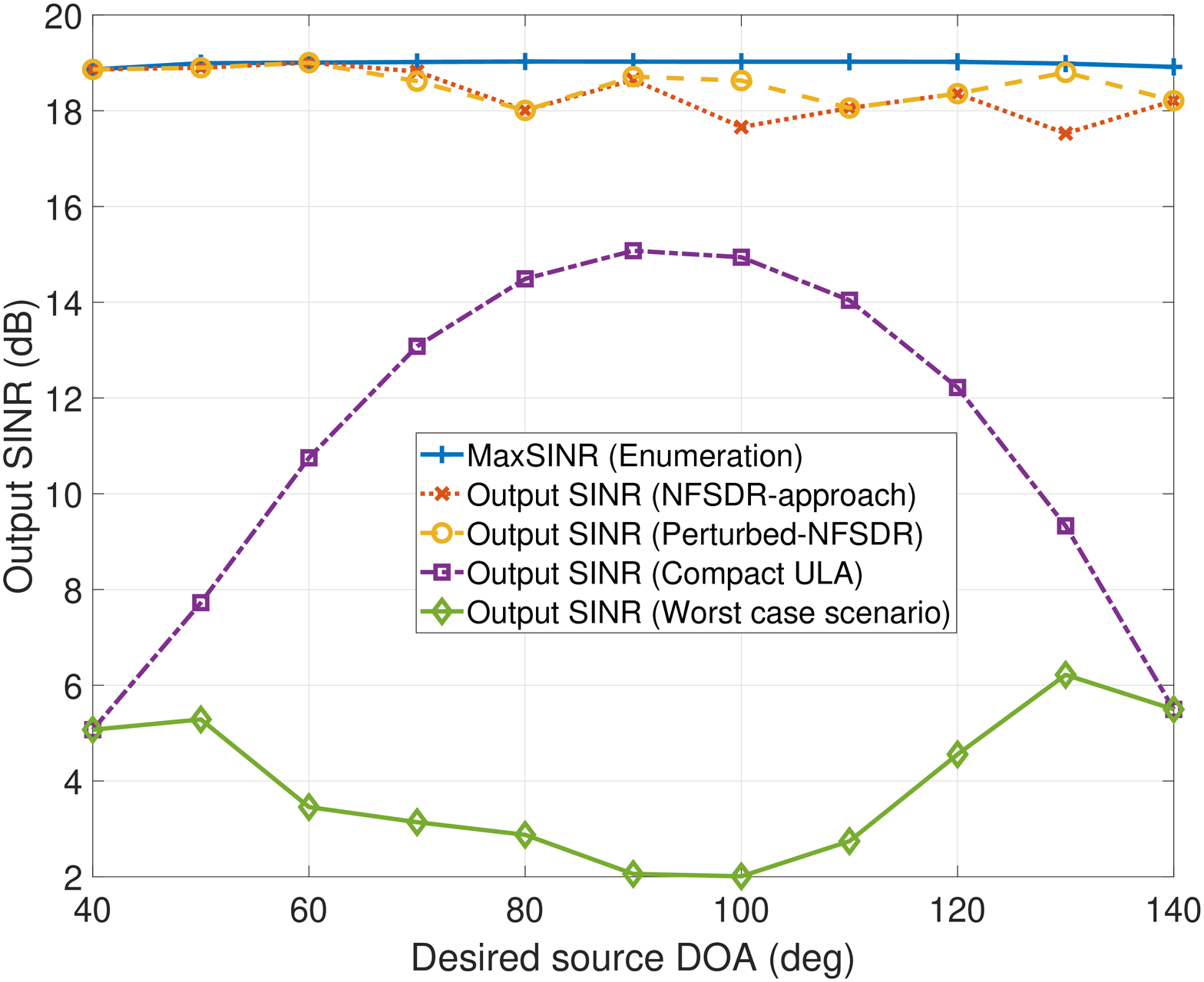}
	\caption{Output SINR for different array topologies}
	\label{scan}
\end{figure}
\section{Simulations} \label{Simulations}
In this section, we show the effectiveness of the proposed techniques for the sparse array design for  MaxSINR.  We initially examine the  proposed approach for  array configurability   by considering  arbitrary arrays  without  the augmentability constraint. In the later examples,  we  demonstrate the effectiveness  of fully augmentable  hybrid sparse array design through linear and 2D arrays. We focus on the EM modality, and as such we use antennas for sensors.

\subsection{Single point source} \label{Single point source}
We select $P=8$ antennas from $N=16$  possible equally spaced locations with  inter-element spacing of $\lambda/2$. Figure \ref{scan} shows the output SINR for different  array configurations for  the case  of single desired point source with its DOA varying from   $40^0$ to $140^0$. The interfering signals are located at $20^0$ and $\pm10^0$ degree apart  from the desired source angle. To explain this scenario, suppose that the desired source is at $60^0$, we consider the respective directions of arrival of the three interfering signals   at $40^0$, $50^0$ and $70^0$. The SNR of the desired signal is $10$ dB, and the  interference to noise ratio (INR)  is set to $10$ dB for each scenario. The input SINR is $-4.9$ dB. The upper and lower limit of the sparsity parameter $\mu$ is set to $1.5$ and $0.01$ respectively, $\gamma=0.05$  and $\epsilon=0.1$. From  the  Fig. \ref{scan}, it is evident that the NFSDR-approach performs  close to the performance of the optimum array found by exhaustive search ($12870$ possible configurations), which has very high computational cost attributed to expensive singular value decomposition (SVD) for each enumeration. Moreover, the perturbed-NFSDR algorithm results in comparable or better performance. Except for the slightly lower performance at the desired source of DOA of $70^0$, we observe that for the desired source of DOA at $90^0$, $100^0$ and $130^0$, the perturbed-NFSDR recovers a sparse array with better performance than the NFSDR-approach. For the other DOAs, the perturbed-NFSDR recovers the same symmetric configuration as that recovered by the NFSDR-approach.  This emphasizes that the perturbed-NFSDR does not  eliminate the possibility of  symmetric solutions and optimizes over  both the symmetrical and unsymmetrical array configurations.    On average, the proposed algorithms  takes six to seven  iterations to converge to the optimum antenna locations; hence, offering  considerable savings in the computational cost. It is of interest to compare  the optimum sparse array performance with that of compact uniform linear array (ULA). It can be seen from    Fig. \ref{scan}, that the optimum sparse array offers considerable SINR advantage over the compact  ULA for all   source  angles of arrival. The  ULA performance degrades  severely when the source of interest is more towards the array end-fire location.  In this case, the   ULA fails to resolve and cancel the strong interferers as they are located close to the desired source. 
\begin{figure}[t]
	\centering
	\includegraphics[width=3.38 in, height=2.50 in]{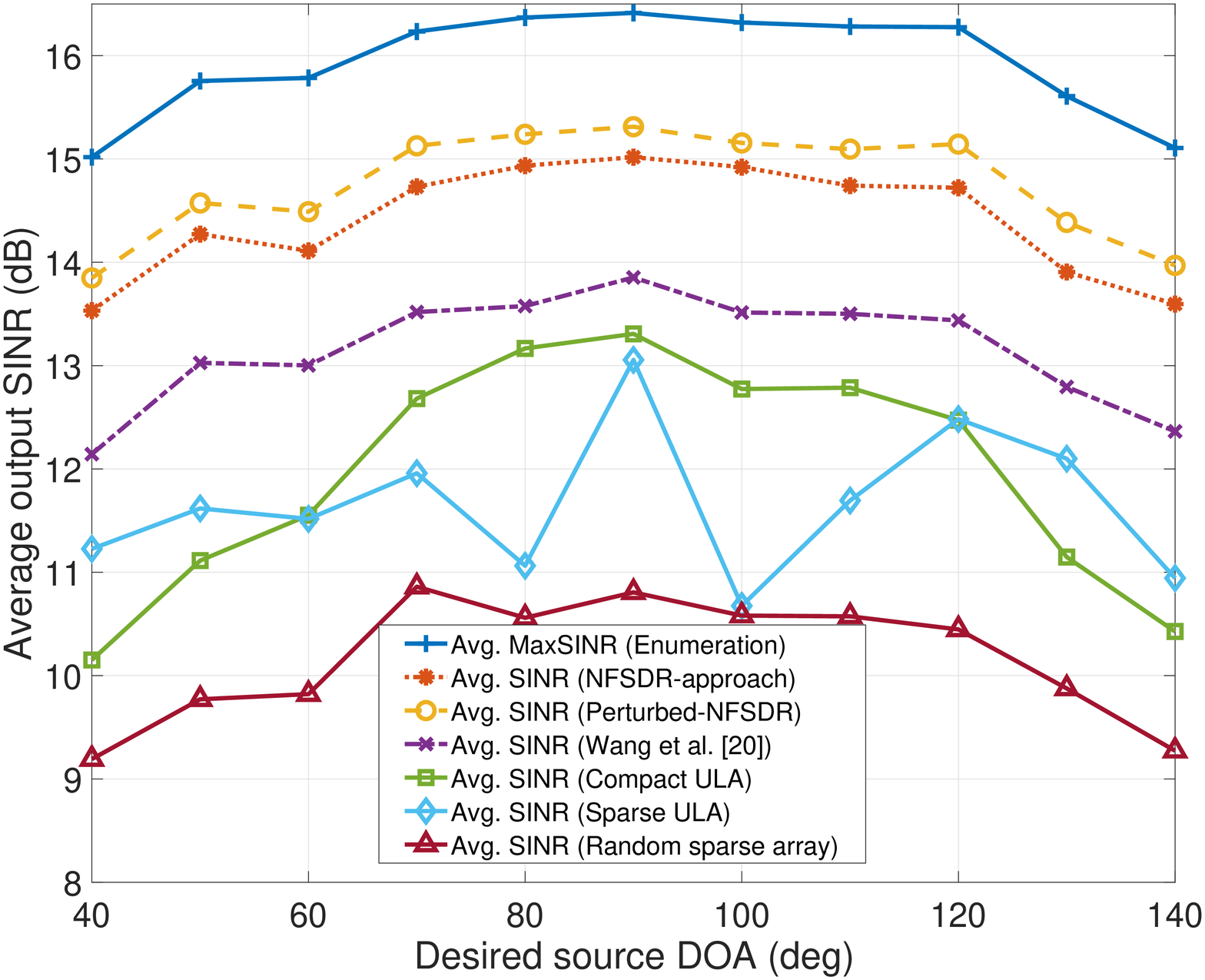}
	\caption{Average Output SINR for different array topologies over $6000$ Monte Carlo trials}
	\label{scan_monte}
\end{figure}

\begin{figure}[!t]
	\centering
	\begin{subfigure}[b]{0.5\textwidth}
		\includegraphics[width=8.9cm, height=1.2cm]{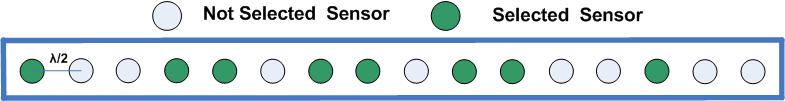}
		\caption{}
		\label{best_90_0}
	\end{subfigure}
	\begin{subfigure}[b]{0.5\textwidth}
		\includegraphics[width=8.9cm, height=0.75cm]{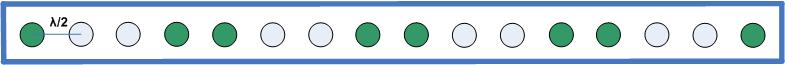}
		\caption{}
		\label{op_algo_90_0}
	\end{subfigure}
	\begin{subfigure}[b]{0.5\textwidth}
		\includegraphics[width=8.9cm, height=0.75cm]{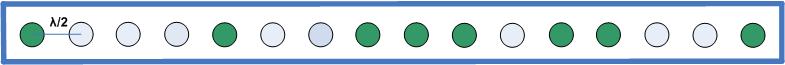}
		\caption{}
		\label{op_symalgo_90_0}
	\end{subfigure}
	\begin{subfigure}[b]{0.5\textwidth}
		\includegraphics[width=8.9cm, height=0.75cm]{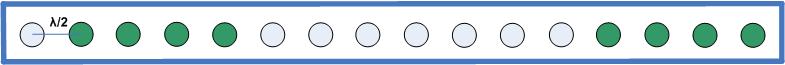}
		\caption{}
		\label{worst_90_0}
	\end{subfigure}
	\caption{Array configurations obtained  for the  point source at the array broadside (a) Optimum  (Enumeration)  (b) NFSDR-approach  (c) Perturbed-NFSDR  (d) Worst performing array configuration  }
	\label{point}
\end{figure}

For the case of the desired source  at the array broadside,  the maximum output SINR  of the optimum array found through enumeration (Fig. \ref{best_90_0})  is $19$ dB.  The optimum array  design obtained through the  NFSDR-approach yields an output SINR of $18.6$ dB, which is  $0.4$ dB less than the corresponding SINR of the optimum array found through exhaustive search.  The broadside source arrays are shown in the  Fig. \ref{point} (where green-filled circle indicates antenna present whereas gray-filled circle indicates antenna absent).  The sparse array recovered through NFSDR-approach is clearly a symmetric configuration (Fig. \ref{op_algo_90_0}). Figure  \ref{op_symalgo_90_0} shows the sparse array found after addressing the symmetry bias by the  approach explained in Section \ref{Symmetric arrays}. The SINR for this non-symmetric configuration is $18.7$ dB and is suboptimal merely by  $0.3$ dB.   It is worth noticing  that  the worst performing sparse array configuration (Fig. \ref{worst_90_0}) comparatively engages larger  array aperture than the optimum array found through enumeration (Fig. \ref{best_90_0}), yet it  has an output SINR as low as  $2.06$ dB. This  emphasizes the fact that if an arbitrary  sparse array structure is employed, it could degrade the performance catastrophically irrespective of the occupied aperture  and could perform far worst than the compact ULA, which offers modest output SINR of  $15.07$ dB  for the  scenario under consideration. 
\subsubsection{Monte Carlo Simulation} \label{Monte1}
To thoroughly examine the performance of the proposed algorithms under  random  interfering environments, we perform 6000 Monte Carlo simulations.  For this purpose, the desired source DOA is fixed with SNR of $10$ dB, and eight  interferences are generated  which are uniformly distributed anywhere  from $20^0$ to $160^0$.  The INRs of these sources are  uniformly drawn from $10$ dB to $15$ dB. We choose 8 antennas out of 16 possible locations. The upper and lower limit of the sparsity parameter $\mu$ is set to 3 and 0.01 respectively, $\gamma=0.1$  and $\epsilon=0.05$. The performance curves are shown in  Fig. \ref{scan_monte} for the desired source fixed at $11$ different DOAs varying from $40^0$ to $140^0$. On  average, the proposed perturbed-NFSDR  algorithm consistently provided superior SINR performance. However,  this performance is around $1.2$ dB suboptimal than the average SINR computed through enumeration. The average SINR performance of the perturbed-NFSDR algorithm is around $0.35$ dB better than the  proposed NFSDR-approach. This is because the degrees of freedom are limited by the inherent array symmetry enforced by the re-weighted optimization scheme. The performances of the proposed algorithms are compared with the  design methodology proposed in \cite{8313065}, which relies on the \textit{a priori} knowledge of the interference steering vectors and respective powers. It is noted  that in the underlying scenario the design in \cite{8313065} is more than $1$ dB suboptimal than the proposed algorithms and around $2$ dB suboptimal as compared to the performance upper bound. The algorithm in \cite{8313065} relies on successive linear approximation of the objective function as opposed to the  quadratic implementation of the SDR, thereby suffering  in performance. The SINR performances  for the compact ULA, sparse ULA and randomly employed sparse topology are also shown in the Fig. \ref{scan_monte}, further highlighting the utility of sparse array design.
\begin{figure}[t]
	\centering
	\begin{subfigure}[b]{0.5\textwidth}
		\includegraphics[width=8.9cm, height=1.2cm]{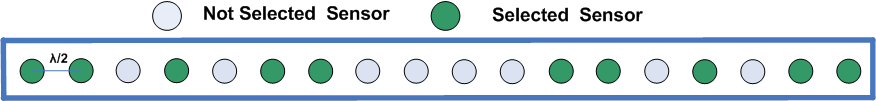}
		\caption{}
		\label{op_alg_45_5}
	\end{subfigure}
	\begin{subfigure}[b]{0.5\textwidth}
		\includegraphics[width=8.9cm, height=0.8cm]{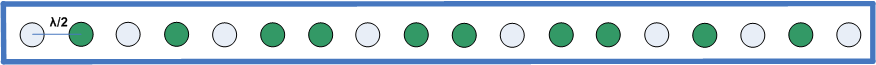}
		\caption{}
		\label{op_45_5}
	\end{subfigure}
	\begin{subfigure}[b]{0.5\textwidth}
		\includegraphics[width=8.9cm, height=1.2cm]{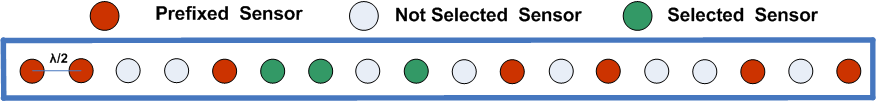}
		\caption{}
		\label{op_alg_multi}
	\end{subfigure}
	\caption{(a) Antenna array multiple sources (NFSDR-approach)   (b)   Fair gain $10$ element antenna array (NFSDR-approach)    (c)  Hybrid  $10$ antenna array  for multiple desired sources (FSDR)    }
	\label{op_c_co_c}
\end{figure}
\begin{figure}[t]
	\centering
	\includegraphics[width=3.38 in, height=2.50 in]{}
	\caption{Beampattern  for  multiple point sources}
	\label{multi_bp}
\end{figure}
\begin{figure}[t]
	\centering
	\begin{subfigure}[b]{0.5\textwidth}
		\includegraphics[width=8.9cm, height=1.2cm]{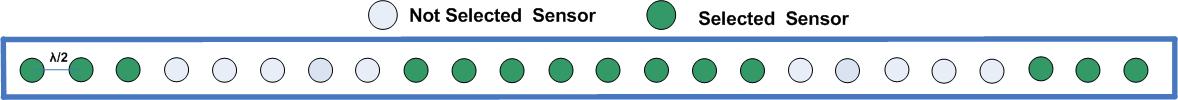}
		\caption{}
		\label{op_1D}
	\end{subfigure}
	\begin{subfigure}[b]{0.5\textwidth}
		\includegraphics[width=8.9cm, height=1.2cm]{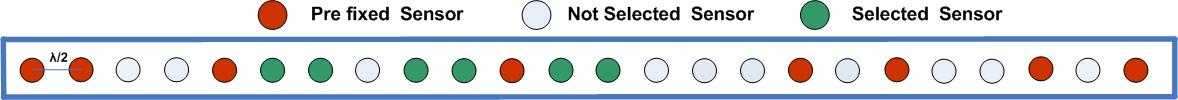}
		\caption{}
		\label{mm1_1D-hybrid}
	\end{subfigure}
	\begin{subfigure}[b]{0.5\textwidth}
		\includegraphics[width=8.9cm, height=0.79cm]{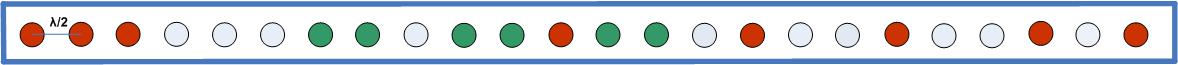}
		\caption{}
		\label{mm2_1D-hybrid}
	\end{subfigure}
	\caption{(a) $14$ element antenna  array (NFSDR-approach)  (b)    Hybrid  $14$ antenna sparse array ($8$ prefixed, $6$ selected through FSDR)   (c)  Hybrid $14$ antenna sparse array ($8$ prefixed, $6$ selected through FSDR)}
	\label{1D-hybrid}
\end{figure}

\subsection{ Multiple point sources} \label{Multiple point sources}
For the multiple point sources scenario,  consider three desired signals  impinging from DOAs $40^0$,  $65^0$ and $90^0$ with SNR of $0$ dB each. Unlike the example in \ref{Single point source}, we set four strong interferers with INR of $30$ dB are operational at DOAs $50^0$,  $60^0$,  $120^0$  and $150^0$. In so doing, we  analyze the robustness of the proposed scheme under very low input SINR of $-36.02$ dB.  We select $10$ antennas out of $18$ available slots. The optimum array recovered through convex relaxation is shown in Fig. \ref{op_alg_45_5}. This configuration results with an  output SINR of $11.85$ dB  against SINR of $12.1$ dB for the optimum configuration found through enumeration. For the fair gain beamforming, we apply the optimization of   (13) and the  array configuration for MaxSINR for  the fair gain beamforming is shown in Fig \ref{op_45_5}. The output SINR for the fair beamforming case is $11.6$ dB which is slightly less than the optimum array without the fair gain consideration ($11.85$ dB). However, the advantage of  fair beamforming is well apparent from the beampatterns in both cases as shown in Fig \ref{multi_bp}, where the  gain towards the source at $65^0$ is around $4.24$ dB higher than the case of optimum array without the fair gain consideration. The maximum gain deviation for the fair gain case is $3.5$ dB vs. $8$ dB variation without the fair gain consideration.   The SINR  of compact ULA is  compromised more than $3$ dB   as compared to the optimum sparse array (Fig. \ref{op_alg_45_5}) obtained through the proposed methodology. This improved performance is due to  the optimum sparse array smartly engaging  its degrees of freedom to eradicate the interfering signals while maintaining maximum gain towards all  sources of interest.
\subsection{ Fully augmentable linear arrays}
Consider selecting $14$ antennas out of $24$ possible available  locations with   antenna spacing of $\lambda/2$. A desired source is impinging from DOA of $30^0$ and SNR of $10$ dB, whereas  narrowband jammers are operating at $20^0$, $40^0$ and  $120^0$ with INR of $10$ dB each. The range of $\mu$ and other parameters are the same as in \ref{Monte1}. Optimum array configuration (Fig. \ref{op_1D}) achieved through convex relaxation (NFSDR-approach) has an  output SINR of $21.29 $ dB   as compared to SINR of  $21.32$ dB of an optimum array recovered through enumeration ($1.96*10^6$ possible configurations). It should be noted that the array recovered without filled co-array constraint is  not essentially fully augmentable as is the case  in the optimum array (Fig. \ref{op_1D}) which clearly has missing co-array lags. 

In quest of fully augmentable array design we  prefix  8 antennas (red elements in Fig. \ref{mm1_1D-hybrid}) in a minimum redundancy array (MRA)  configuration over 24 uniform grid points. This provides 24 consecutive autocorrelation lags.   We are, therefore, left with six antennas to be placed in the remaining 16 possible locations ($8008$ possible configurations). We  enumerated the performance of all possible hybrid arrays associated with underlying MRA configuration  and found the output SINR ranges from $18.1$ dB to $21.3$ dB. Figure \ref{mm1_1D-hybrid} shows  the  configuration recovered through the proposed approach  which has an output SINR of $20.96$ dB.  The proposed approach thus recovers the hybrid sparse array with performance close to the best possible, moreover it  approximately yields $3$ dB advantage over worst fully augmentable hybrid array. As MRAs are not unique we started with   a different 8 element MRA structured  array (red elements in Fig. \ref{mm2_1D-hybrid}),  to further reinforce the effectiveness of fully augmentable sparse arrays. The dynamic performance range associated with  MRA of Fig. \ref{mm2_1D-hybrid}, is  from $17.59$ dB to   $21.3$ dB. The performance in this case is very similar to the aforementioned MRA configuration  with the output SINR of  $21.08$ dB for the hybrid array recovered through proposed methodology (Fig. \ref{mm2_1D-hybrid}). The maximum possible SINR offered by both hybrid arrays is $21.3$ dB which  is extremely close to SINR performance of $21.32$ dB offered by the optimum array without augmentability constraint.  
\begin{figure}[!t]
	\centering
	\includegraphics[width=3.38 in, height=2.50 in]{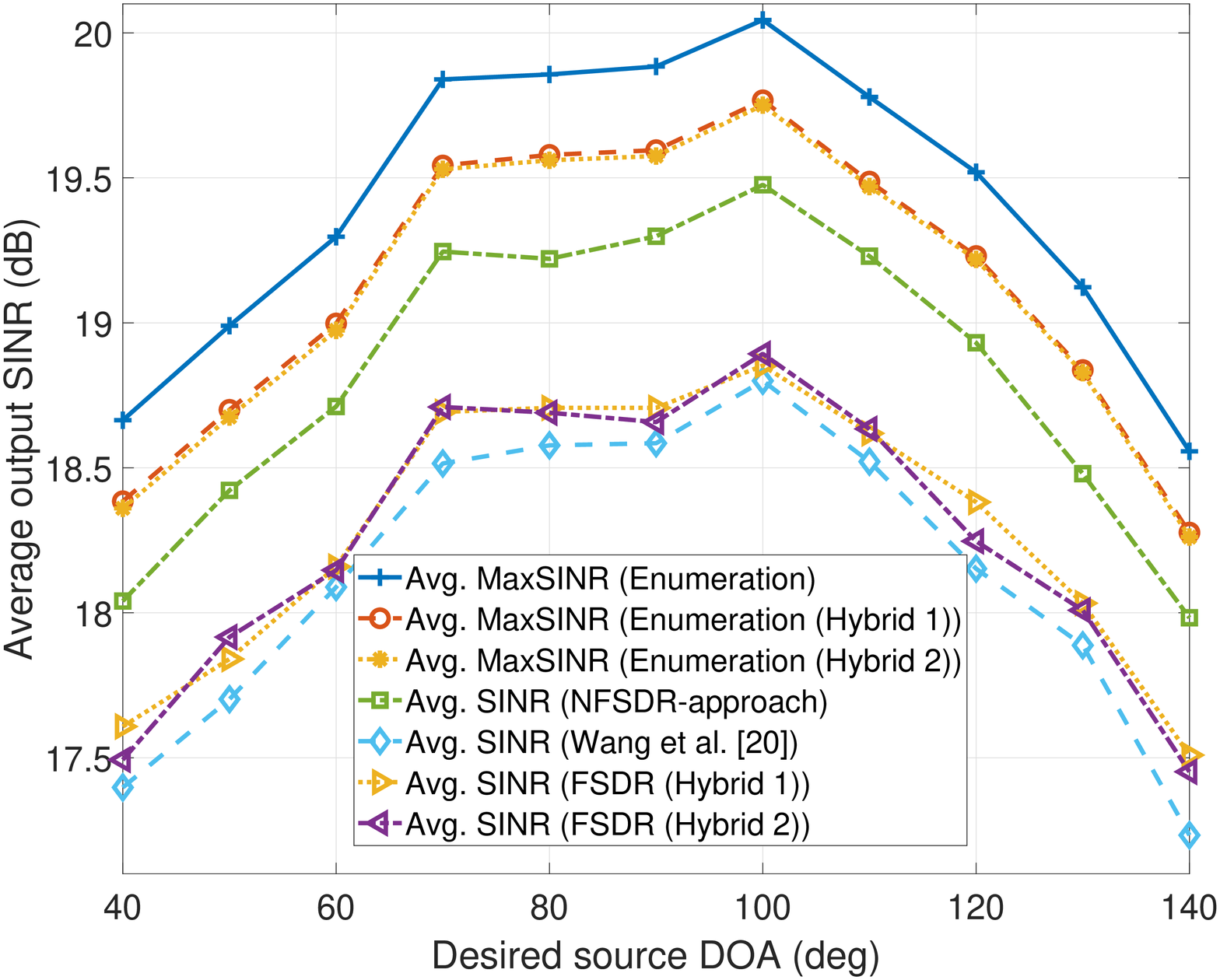}
	\caption{Average Output SINR for different array topologies over $3500$ Monte Carlo trials}
	\label{scan_monte2}
\end{figure}
\subsubsection{Monte Carlo Simulation}
We generate 3500  Monte Carlo simulations for comparison between  the performance of the sparse arrays that are designed freely and that of sparse array design involving   full augmentability constraint.  We choose $16$ antennas out of $24$ available locations.  The desired source DOA is fixed  with SNR of $10$ dB as in \ref{Monte1}.  We assume twelve   narrowband interferences drawn uniformly  from $20^0$ to $160^0$ with  respective INRs   uniformly distributed from $10$~dB to $15$~dB.  For binary search, the upper and lower limit of the sparsity parameter $\mu$ is  5 and 0.01 respectively and $\epsilon=0.1$,  for all 3500 scenarios.  Fig. \ref{scan_monte2} shows the average SINR performance, where the proposed NFSDR-approach  is only $0.57$ dB suboptimal relative to the optimum array found through enumeration (choosing 16 antennas out of 24 involves $735471$ prospective configurations). However,  this performance is achieved by sparse arrays without ensuring the augmentabilty constraint. Therefore, we prefix $8$ antennas in MRA topology, namely Hybrid $1$  and Hybrid $2$ prefix configurations,  shown in red circles in  Figs.  \ref{mm1_1D-hybrid} and Fig. \ref{mm2_1D-hybrid} respectively. The MaxSINR performance, found by enumeration, for either of the  underlying hybrid topologies competes very closely as evident in Fig. 8. The average MaxSINR (found by enumeration), under both prefixed configurations, is only compromised by $0.28$ dB relative to the average MaxSINR performance offered without the augmentability constraint. It is noted that in this case, the possible sparse configurations are drastically reduced from $735471$ to  $12870$ (choose the remaining $8$ antennas from the remaining $16$ possible locations due to prefixing $8$ antennas \textit{a~priori}). It is clear from Fig. \ref{scan_monte2} that the proposed FSDR algorithm successfully  recovers the hybrid sparse array with an average SINR performance loss of $0.8$ dB. We remark that  the performance of the hybrid sparse array  is still slightly better than the optimum sparse array receive beamforming proposed in  \cite{8313065} that assumes the knowledge of jammers' steering vectors and utilizes all the available degrees of freedom, unlike the hybrid sparse array. 

\subsection{ Fully augmentable 2D arrays}
 Consider a $7 \times 7$ planar array with grid pacing of $\lambda/2$ where we place 24 antennas at 49 possible positions.  A desired source is impinging from elevation angle $\theta=50^0$  and azimuth angle of  $\phi=90^0$. Here, elevation angle is with respect to the plane carrying the array  rather than reference from the zenith. Four strong interferes are impinging from  ($\theta=20^0$, $\phi=30^0$), ($\theta=40^0$, $\phi=80^0$), ($\theta=120^0$, $\phi=75^0$) and ($\theta=35^0$, $\phi=20^0$). The INR corresponding to each interference is $20 $ dB and SNR is set to $0$ dB. There are of the order of $10^{14}$ possible 24 antenna configurations, hence the problem is prohibitive even by exhaustive search. Therefore, we resort to  the upper bound of performance limits to compare our results. Here, we utilize the fact that the best possible performance  occurs when the interferes are completely canceled  in the array output and the output SINR in that case would equal the array gain offered by the 24 element array which amounts to  $13.8$~dB.  Figure \ref{2d_ideal} shows the optimum antenna locations recovered by the proposed NFSDR-approach. The output SINR for this configuration is $13.68$ dB which is sufficiently  close to the ideal performance. It should be noted that again the array recovered in the Fig. \ref{2d_ideal} is not fully augmentable  as it is missing quiet a few correlation lags.
 
  We now introduce the condition of full augmentability  by placing 19 antennas in nested lattice configuration \cite{6214623}  to form a filled co-array (red elements in Fig.  \ref{2d_hybridop}). The rest of five available antennas can be placed in the remaining 30 possible locations hence resulting in approximately $1.5 * 10 ^5$ possibilities. Figure \ref{2d_hybridop} shows the  hybrid sparse geometry recovered by FSDR algorithm and offers SINR of $13.25$ dB which is around $0.4$ dB less than the optimum array. The performance range of the hybrid arrays associated with the structured nested lattice array ranges from $11.4$ dB to $13.38$ dB (found through exhaustive search). In this regard the FSDR algorithm finds the  hybrid sparse  array with the performance degradation  of little more than $0.1$ dB. The worst performing hybrid array (Fig.  \ref{2d_hybridwo}) has an output SINR of $11.4$ dB and is around $2$ dB lower than the  best performing hybrid sparse array.
  
   It is of interest to compare the performance of aforementioned sparse arrays with a compact $2$D array. For this purpose, we chose a $6 \times 4$ rectangular array. The compact rectangular array performs very poorly in the underlying scenario and has an output SINR of $7.8$ dB which is more than $ 5$ dB down from the  hybrid  sparse array recovered through the semidefinite relaxation. This performance degradation is very clear from  the beampattern of both arrays shown in Figs. \ref{2d_bp_ideal} and \ref{2d_bp_rec} (normalized beampattern in dB). In the  case of the  hybrid sparse array recovered through FSDR (Fig. \ref{2d_hybridop}), the target has the maximum gain towards the direction of interest with minimum gain simultaneously towards all unwanted DOAs (Fig. \ref{2d_bp_ideal}).  In contrast,  it is clear from Fig. \ref{2d_bp_rec} that the beampattern of the compact rectangular array  could not manage maximum gain towards the direction of interest while effectively rejecting the interfering signals. Although, the $6 \times 5$  and $6 \times 6$ compact arrays utilize $6$ and $12$ additional sensors, yet the  respective output SINRs of $9.04$ dB and  $11$ dB are considerably suboptimal relative to the proposed solutions. It is noted that adding 18  additional sensors resulting in $7 \times 6$ rectangular array has an output SINR of $12.87$ dB. Still, the 24 element free-design as well as the hybrid design outperform the compact $42$ element  rectangular array. However, a $49$ element fully populated $7 \times 7$ rectangular array has an output SINR of $14.37$ dB, which is marginal improvement given the SINR of 24 element designed topologies.  The hybrid array also appears to be more  robust as it has higher dynamic performance range threshold ($11.4$ dB).  The   performance of arbitrarily designed arrays is more prone  to deteriorate catastrophically  even far worse than  that   of the  compact uniform or  rectangular arrays.

  \begin{figure}[!t]
  	\centering
  	\includegraphics[height=2.48in,width=3in]{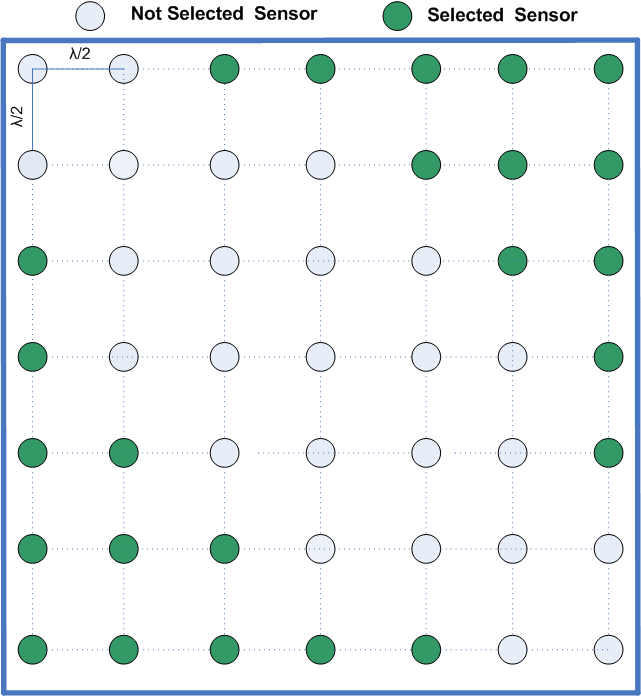}
  	\caption{$24$ element antenna sparse array (NFSDR-approach)}
  	\label{2d_ideal}
  \end{figure}
  We also test the fully augmentable array design for the case of  multiple point source scenario described previously (Section \ref{Multiple point sources}). The hybrid array recovered through proposed methodology is shown in the Fig. \ref{op_alg_multi} (red elements showing the $7$ element MRA). The output SINR is $11.566$ dB and is sufficiently close to the performance achieved through enumeration. 
  \begin{figure}[!t]
  	\centering
  	\includegraphics[height=2.48in,width=3in]{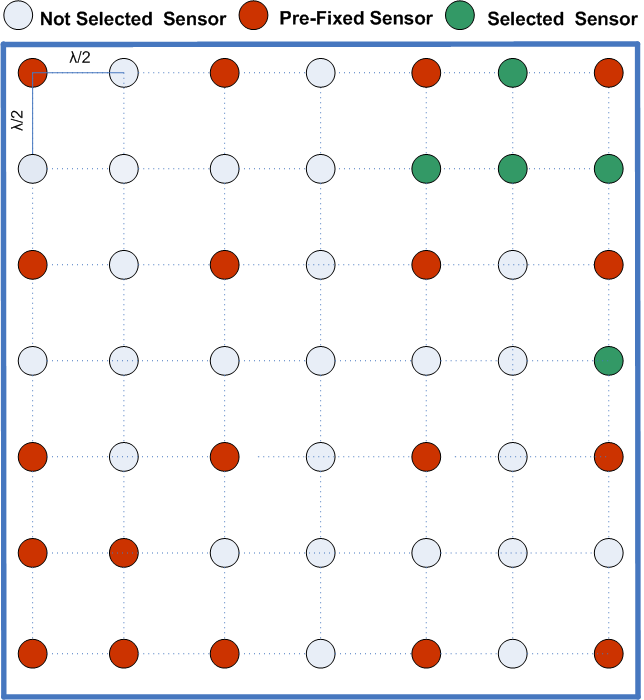}
  	\caption{ $24$ element  hybrid antenna sparse array  ($19$ prefixed, $5$ selected through FSDR)}
  	\label{2d_hybridop}
  \end{figure}
  \begin{figure}[!t]
  	\centering
  	\includegraphics[height=2.48in,width=3in]{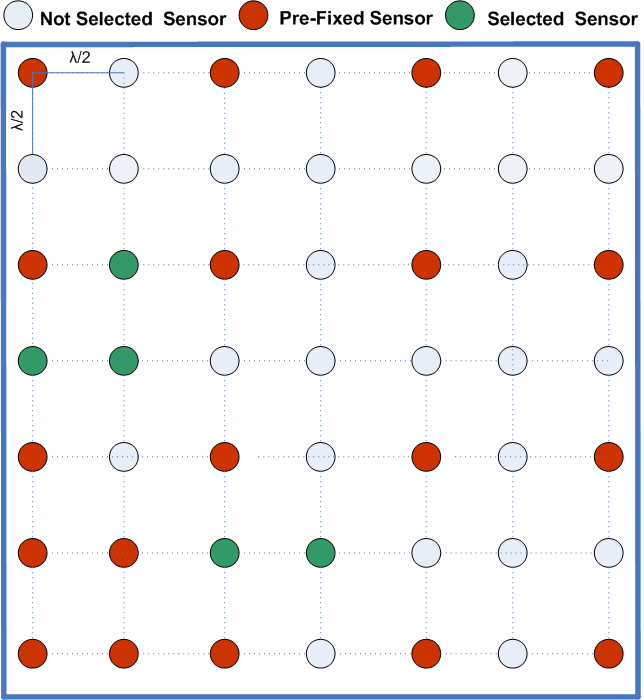}
  	\caption{ $24$ element worst performing hybrid antenna sparse array  ($19$ prefixed, $5$ selected)}
  	\label{2d_hybridwo}
  \end{figure}

\section{Conclusion}
This paper considered fully augmentable  sparse array configurations for maximizing the beamformer output SINR for general rank desired signal  correlation matrices.  It proposed a  hybrid  sparse array design that  simultaneously considers co-array and environment-dependent objectives.  The proposed array design approach uses a subset of the available antennas to obtain a fully augmentable array while employing the remaining antennas  for achieving the highest SINR. It was shown that the hybrid design  is data driven and  hence practically  viable, as it ensures the availability of the full data correlation matrix  with a reasonable trade off in the SINR performance. We applied the modified re-weighting QCQP  which proved effective     in recovering  superior SINR performance for hybrid sparse arrays in polynomial run times. The proposed approach  was  extended  for    fair gain beamforming towards   multiple   sources. We solved the optimization problem by both the proposed algorithms and  enumeration and showed strong agreement between the two methods. 
\begin{figure}[!t]
	\centering
	\includegraphics[width=3.48 in, height=2.55 in]{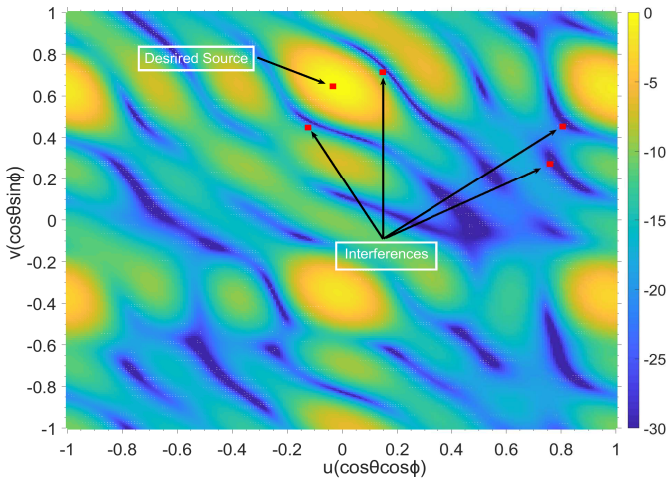}
	\caption{Beampattern  for the antenna array in Fig. \ref{2d_hybridop}}
	\label{2d_bp_ideal}
\end{figure}

\appendix[Proof of the Conjugate symmetric property of optimal weight vector]\label{Appendix1}
\begin{proof}
	The correlation matrix $\mathbf{R}$ for centro-symmetric arrays have a conjugate persymmetric structure such that \cite{86911}: \\
\begin{equation}\label{appendixa}
\mathbf{TR^{'}T}=\mathbf{R}
\end{equation}

Here  $\{'\}$ is the conjugate operator and  $\mathbf{T}$ is the  transformation matrix which flips the entries of a vector upside down by left multiplication;\\ 
\[
\mathbf{T}=
\begin{bmatrix}
	0       & \dots & 0 &0  & 1 \\
	0       & \dots & 0 & 1 & 0 \\
	\vdots     & \dots   &  &\vdots \\
	1       & \dots & 0 &  & 0
\end{bmatrix}
\]
The  optimal weight vector which maximizes the SINR is given by;
\begin{equation}\label{appendixb}
\mathbf{w}_o = \mathscr{P} \{  \mathbf{R_{s^{'}}}^{-1}\mathbf{R_s}  \}
\end{equation}
where,
\begin{equation}\label{appendixc}
\{  \mathbf{R_{s^{'}}}^{-1}\mathbf{R_s} \}  \mathbf{w}_o = \Lambda_{max} \mathbf{w}_o
\end{equation}
Using the relation in  (\ref{appendixa}),    (\ref{appendixc}) can be re-expressed as follows,
 \begin{equation}\label{appendixd}
 \begin{aligned}
 \{  \mathbf{(TR^{'}_{s^{'}}T)^{-1}}\mathbf{(TR^{'}_sT)} \}  \mathbf{w}_o & = \Lambda_{max} \mathbf{w}_o \\
  \{  \mathbf{T^{-1}(R^{'}_{s^{'}})^{-1}T^{-1}}\mathbf{(TR^{'}_sT)} \}  \mathbf{w}_o & = \Lambda_{max} \mathbf{w}_o \\
  \end{aligned}
 \end{equation}
 Multiplying both sides by $\mathbf{T}$ and applying the conjugate operator,
 \begin{equation}\label{appendixe}
 \{  \mathbf{R_{s^{'}}}^{-1}\mathbf{R_s} \}  \mathbf{T}\mathbf{w}^{'}_o = \Lambda_{max} \mathbf{T}\mathbf{w}^{'}_o
 \end{equation}
 From (\ref{appendixe}), we note, that $\mathbf{T}\mathbf{w}_o^{'}$ is also the principal eigenvector associated with matrix $   \mathbf{R_{s^{'}}}^{-1}\mathbf{R_s} $. Since the principal eigenvector of the positive definite hermitian matrix  is unique up to  the scalar complex multiplier, this directly implies that;
  \begin{equation*}\label{appendixf}
 \mathbf{w}_o= \mathbf{T}\mathbf{w}_o^{'} 
 \end{equation*} 
\end{proof}

\begin{figure}[!t]
	\centering
	\includegraphics[width=3.48 in, height=2.55 in]{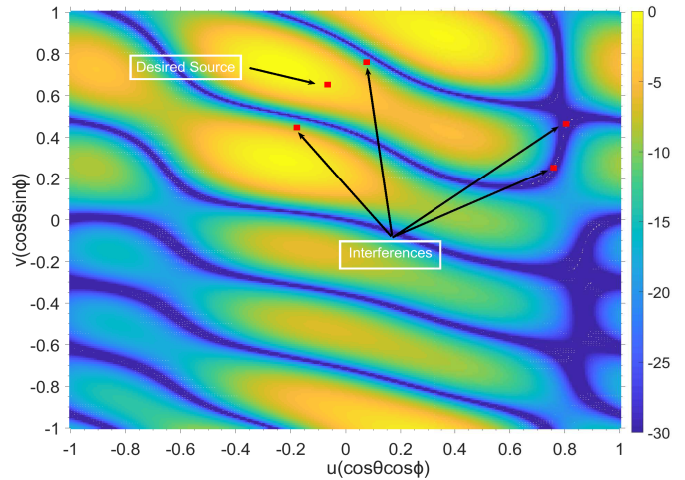}
	\caption{Beampattern  for a $6 \times 4$ compact rectangular  array}
	\label{2d_bp_rec}
\end{figure}

\ifCLASSOPTIONcaptionsoff
  \newpage
\fi

\bibliographystyle{IEEEtran}
\bibliography{references}
\end{document}